\tikzset{
  horizontal center at origin/.style={
    execute at end picture={
      \path let \p1=(current bounding box.west),\p2=(current bounding box.east)
      in ({-max(-1*\x1,\x2)},\y1) ({max(-1*\x1,\x2)},\y1);
    }
  },
  dots/.style args={#1 per #2}{
    line cap=round,
    dash pattern=on 0 off #2/#1
  }
}
\colorlet{darkblue}{blue!75!black}
\crefname{equation}{}{}
\crefname{figure}{Figure}{Figures}
\theoremstyle{plain}
\newtheorem{theorem}{Theorem}[section]
\newtheorem{proposition}[theorem]{Proposition}
\newtheorem{corollary}[theorem]{Corollary}
\theoremstyle{definition}
\newtheorem{definition}[theorem]{Definition}
\newtheorem{remark}[theorem]{Remark}
\newtheorem{question}[theorem]{Question}
\newcommand{\NN}{\mathbb{N}}
\newcommand{\bL}{\mathbf{L}}
\newcommand{\bV}{\mathbf{V}}
\newcommand{\cC}{\mathcal{C}}
\DeclarePairedDelimiter{\sizeOf}{\lvert}{\rvert}
\DeclareMathOperator{\poly}{poly}
\DeclareMathOperator{\var}{var}
\newcommand{\false}{\mathit{False}}
\newcommand{\true}{\mathit{True}}
\newcommand{\liff}{\leftrightarrow}
\newcommand{\lneg}[1]{\overline{#1}}
\newcommand{\complexityclass}[1]{\mathsf{#1}}
\newcommand{\PTIME}{\complexityclass{P}}
\newcommand{\NP}{\complexityclass{NP}}
\newcommand{\proofsystem}[1]{\mathrm{#1}}
\newcommand{\CP}{\proofsystem{CP}}
\title{Regular~resolution effectively~simulates resolution}
\author{
  Sam Buss\thanks{
    Department of Mathematics, University of California, San Diego.
    Email: \texttt{sbuss@ucsd.edu}.
  }
  \and
  Emre Yolcu\thanks{
    Computer Science Department, Carnegie Mellon University.
    Email: \texttt{eyolcu@cs.cmu.edu}.
  }
}
\begin{document}

\maketitle

\begin{abstract}
  Regular resolution is a refinement of the resolution proof system
  requiring that no variable be resolved on
  more than once along any path in the proof.
  It is known that there exist sequences of formulas
  that require exponential-size proofs in regular resolution
  while admitting polynomial-size proofs in resolution.
  Thus, with respect to the usual notion of simulation,
  regular resolution is separated from resolution.
  An alternative, and weaker, notion for comparing proof systems
  is that of an ``effective simulation,''
  which allows the translation of the formula
  along with the proof when moving between proof systems.
  We prove that regular resolution is
  equivalent to resolution under effective simulations.
  As a corollary, we recover in a black-box fashion
  a recent result on the hardness of automating regular resolution.
\end{abstract}

\noindent
{\small Keywords: \emph{resolution, regular resolution, effective simulation,
    automatability, proof complexity}}

\section{Introduction}
\label{sec:introduction}

Proof complexity studies the sizes of proofs%
\footnote{Throughout this work, by ``proof''
  we mean a refutation of satisfiability.}
in propositional proof systems.
A common question in proof complexity is
that of the relative strengths of different systems.
The comparison is performed typically with respect to
the notion of a ``simulation''~\cite[Definition~1.5]{CR79}.
A system~$P$ \emph{simulates} another system~$Q$
if every~$Q$-proof can be converted,
with at most a polynomial increase in size,
into a~$P$-proof of the same formula.
An alternative, and weaker, notion of simulation is the following,
which is arguably more natural from an algorithmic point of view.
(See Pitassi and Santhanam~\cite[Section~1]{PS10} for a discussion.)

\begin{definition}[{\cite[Definition~2.5]{PS10}}]%
  \label{def:effective-simulation}
  Let $P$~and~$Q$ be two proof systems
  for a class~$\cC$ of propositional formulas.
  The \emph{size}, $\sizeOf{\Gamma}$, of a formula~$\Gamma$ is
  defined to equal the number of symbols in the formula.
  We say $P$ \emph{effectively simulates}~$Q$ if there exists
  a function~$f \colon \cC \times \NN \to \cC$ such that the following hold.
  \begin{itemize}
  \item The formula~$f(\Gamma, s)$
    can be computed in time polynomial in~$\sizeOf{\Gamma} + s$,
    and it is satisfiable if and only if $\Gamma$~is.
  \item When $s$ is at least
    the size of the smallest~$Q$-proof of~$\Gamma$,
    the formula~$f(\Gamma, s)$ has a~$P$-proof
    of size polynomial in~$\sizeOf{\Gamma} + s$.
  \end{itemize}
\end{definition}

\begin{remark}\label{rem:effective-simulation-parameter}
  As remarked by Pitassi and Santhanam~\cite{PS10},
  the role of the size parameter~$s$ in the above definition
  might not be clear at first glance.
  It would be simpler to define a notion of
  \emph{strict effective simulation} by omitting~$s$
  and requiring that $f(\Gamma)$~be computable
  in time polynomial in~$\sizeOf{\Gamma}$
  and that $f(\Gamma)$ have a~$P$-proof of size polynomial
  in the size of the smallest~$Q$-proof of~$\Gamma$.
  A major motivation behind \cref{def:effective-simulation}
  is its relationship to ``weak automatability''~\cite[Definition~4]{AB04},
  which we define in \cref{sec:automatability},
  and the relaxed definition suffices for the relationship to hold.
\end{remark}

Effective simulations exist in several instances
where either no simulation is known or a separation exists.
Examples include the following,%
\footnote{Some results use a version of \cref{def:effective-simulation}
  that allows comparing proof systems
  over different languages~\cite[Definition~4]{HHU07}.}
where $P \geq Q$ denotes that $P$ effectively simulates~$Q$:
\begin{itemize}[itemsep=0pt]
\item $\text{linear resolution} \geq \text{resolution}$~\cite{BP07,BJ16}
\item $\text{clause learning} \geq \text{resolution}$~\cite{HBPV08,BHJ08}
\item $\text{resolution} \geq \text{$k$-DNF resolution}$~\cite{AB04}
\item $\text{blocked clauses without new variables}
  \geq \text{extended resolution}$~\cite{BT21}
\item $\proofsystem{G}_0 \text{ (``quantified Frege'')}
  \geq \text{any quantified propositional proof system}$~\cite{PS10}
\item $\CP \text{ (cutting planes)}
  \geq \CP \text{ with quadratic terms}$~\cite{Pud03}
\item $\text{constant-depth polynomial calculus}
  \geq \CP,\, \text{Positivstellensatz calculus}$~\cite{IMP20}
\end{itemize}

In this work we prove that
although regular resolution and resolution are separated
with respect to the usual notion of simulation~\cite{Goe93,AJPU07,Urq11},
the two systems are equivalent under effective simulations.
As a technical side note,
although in most of the known effective simulations
the size parameter~$s$ is not needed
(i.e., they are strict effective simulations),
in our simulation it is necessary
that $f$~have access to the parameter.
It is an open question whether regular resolution
strictly effectively simulates resolution.

\section{Preliminaries}
\label{sec:preliminaries}

We assume that the reader is familiar with
propositional logic, proof complexity, and resolution.
We review some concepts to describe our notation.
For all~$n \in \NN$, we let $[n] \coloneqq \{1, \dots, n\}$.
For~$X$ a~nonempty list of variables, we write $\poly(X)$
to denote a quantity bounded by some polynomial in~$X$.

A \emph{literal} is a propositional variable~$x$ or its negation~$\lneg{x}$.
Overline denotes negation,
and if $p$ is the literal~$\lneg{x}$, then $\lneg{p}$ is~$x$.
A \emph{clause} is a disjunction of literals.
We use~$\bot$ to denote the empty clause.
A formula in \emph{conjunctive normal form} (CNF) is a conjunction of clauses.
Throughout this work, by ``formula'' we mean a formula in~CNF\@.
We identify clauses with sets of literals and formulas with sets of clauses.
For a formula~$\Gamma$, we denote by~$\var(\Gamma)$
the set of all the variables occurring in~$\Gamma$.
In particular, for a literal~$p$ of a variable~$x$, we have $\var(p) = x$.

\begin{definition}\label{def:resolution-rule}
  The \emph{resolution rule} is
  \begin{center}
    \AXC{$A \lor x$}
    \AXC{$B \lor \lneg{x}$}
    \BIC{$A \lor B$}
    \DisplayProof,
  \end{center}
  where $A$,~$B$ are clauses
  and $x$ is a variable not occurring in $A$~or~$B$.
  We call $A \lor B$ the \emph{resolvent}
  of~$A \lor x$ and~$B \lor \lneg{x}$ on~$x$.
\end{definition}

\begin{definition}\label{def:resolution-derivation}
  A \emph{resolution derivation} from a formula~$\Gamma$
  is a sequence~$\Pi = C_1, \dots, C_s$ of distinct clauses such that
  for all~$i \in [s]$, the clause~$C_i$ either occurs in~$\Gamma$
  or is a resolvent of two earlier clauses in the sequence.
  If $C_s = \bot$, then $\Pi$ is a \emph{resolution refutation} of~$\Gamma$.
  The \emph{size} of~$\Pi$ is~$s$.
\end{definition}

\begin{remark}\label{rem:resolution-graph}
  A resolution derivation~$\Pi$ from~$\Gamma$
  can be viewed as a directed acyclic graph:
  The nodes of the graph are the clauses in~$\Pi$;
  every \emph{initial clause} (i.e., a clause in~$\Gamma$) has in-degree zero,
  and every other clause~$D$ has two incoming edges
  from the premises of the resolution inference that derives~$D$.
  Thus, every node has in-degree zero or two.
  The \emph{height} of a resolution derivation
  is the number of edges in the longest directed path in the graph.
\end{remark}

\begin{definition}
  A resolution derivation~$\Pi$ is \emph{regular}
  if no variable is resolved on more than once along any directed path in~$\Pi$.
\end{definition}

In the rest of this paper, by ``path'' we mean a directed path.

\section{Main result}
\label{sec:main-result}

\begin{theorem}\label{thm:regular-effectively-general}
  Regular resolution effectively simulates resolution.
\end{theorem}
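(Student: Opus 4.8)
\emph{The construction.} The plan is to let $f(\Gamma,s)$ carry $s+1$ ``time\nobreakdash-stamped'' copies $x^{(0)},\dots,x^{(s)}$ of each variable $x\in\var(\Gamma)$, tied together by equivalence clauses, and to unfold a size\nobreakdash-$s$ resolution refutation of $\Gamma$ into a regular refutation of $f(\Gamma,s)$ in which a resolution step on $x$ that sits at height $t$ in the original DAG is performed on the copy $x^{(t)}$. Precisely, $f(\Gamma,s)$ consists of: (i) the clause $C^{(0)}$ for every $C\in\Gamma$, where $C^{(0)}$ replaces each variable $x$ by $x^{(0)}$; and (ii) for every $x\in\var(\Gamma)$ and every $0\le t<s$, the two clauses $\lneg{x^{(t)}}\lor x^{(t+1)}$ and $x^{(t)}\lor\lneg{x^{(t+1)}}$, expressing $x^{(t)}\liff x^{(t+1)}$. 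Computing $f(\Gamma,s)$ takes time $\poly(\sizeOf{\Gamma}+s)$, and since the equivalence clauses force all copies of a variable to take the same value and then the type\nobreakdash-(i) clauses are exactly $\Gamma$, the formula $f(\Gamma,s)$ is satisfiable precisely when $\Gamma$ is. This is also where the parameter $s$ is essential: $f$ must fix the number of time\nobreakdash-stamp levels in advance, and the number that suffices is controlled by the height --- hence the size --- of the refutation to be simulated.

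\emph{The simulation.} Given a resolution refutation $\Pi$ of $\Gamma$ of size at most $s$, view it as a DAG with sink $\bot$ (\cref{rem:resolution-graph}) and let $h(D)$ denote the height of a clause $D$ in $\Pi$, so that initial clauses have height $0$ and a resolvent of $A$ and $B$ has height $\max(h(A),h(B))+1\le s$. Proceeding by induction on height, I would derive the time\nobreakdash-stamped clause $D^{(h(D))}$ for every $D$ in $\Pi$: initial clauses become type\nobreakdash-(i) clauses at level $0$; and for an internal $D$ that is the resolvent of $A$ and $B$ on a variable $x$, having already obtained $A^{(h(A))}$ and $B^{(h(B))}$, first lift each of them to level $h(D)$ by repeatedly resolving against the equivalence clauses --- one literal and one level at a time, so that lifting a width\nobreakdash-$w$ clause from level $\ell$ to level $\ell'$ costs $w(\ell'-\ell)$ steps --- and then resolve the two lifted clauses on $x^{(h(D))}$ to obtain $D^{(h(D))}$. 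Since $\bot^{(t)}=\bot$, the sink yields $\bot$. Writing $n=\sizeOf{\var(\Gamma)}$, there are at most $s$ internal nodes, each contributing $O(sn)$ lifting steps and one resolution step, so after discarding duplicate clauses the refutation has size $\poly(\sizeOf{\Gamma}+s)$.

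\emph{Regularity} is the step I expect to require the most care. Fix a path $P$ in the constructed refutation; its resolution steps are of two kinds. An \emph{original} step derives some $D^{(h(D))}$ by resolving on $x^{(h(D))}$, where $x$ is the original resolution variable, which by \cref{def:resolution-rule} does not occur in $D$. A \emph{lifting} step inside the block that brings some $A^{(h(A))}$ up to level $h(D)$ resolves on a copy $y^{(\ell)}$ with $y\in\var(A)$ and $h(A)\le\ell\le h(D)-1$. Because heights strictly increase along every path of $\Pi$, the blocks are met by $P$ in increasing order of height; the block from $A^{(h(A))}$ to $D^{(h(D))}$ only touches levels in $\{h(A),\dots,h(D)\}$; and each level $\ell$ is active on $P$ only during one block together with the single original step at its top. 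At such a top, $\ell=h(D)$, the one original step resolves on $x^{(\ell)}$ with $x\notin\var(D)$, whereas every lifting step acting at level $\ell$ resolves on some $y^{(\ell)}$ with $y\in\var(D)$; as these sets of copies are disjoint, no copy is resolved on twice along $P$. Paths that enter a block through an equivalence axiom are covered by the same count applied to the relevant suffix. The delicate point is the boundary bookkeeping --- the precise off\nobreakdash-by\nobreakdash-ones in the definition of $h$ and in how far each clause is lifted --- but granting that, $f$ witnesses \cref{def:effective-simulation}, so regular resolution effectively simulates resolution.
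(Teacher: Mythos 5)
Your construction and the paper's coincide at the top level: both form $f$ by adjoining a chain of renamed copies of each variable linked by equivalence $2$-clauses, and both turn the given refutation into a regular one by pushing every clause to a level that changes monotonically along paths, so that the copy resolved on at an inference is never resolved on again downstream. Your argument is correct, including the regularity analysis: the one genuinely delicate boundary case --- level $h(D)$ being shared between the original step producing $D^{(h(D))}$ and the lifting steps that immediately re-raise $D^{(h(D))}$ in the next block --- you identify and settle correctly, since the former resolves on the eliminated variable $x\notin\var(D)$ while the latter resolve only on variables of $D$. The difference from the paper is the choice of level function. The paper assigns levels per \emph{literal}, using the ``irregularity height'' $L_D(x)$, the maximum number of remaining resolutions on $x$ along paths from $D$ to the sink; levels therefore \emph{decrease} toward $\bot$, only $h$ (the height of $\Pi$) levels are needed, the top level is identified with the original variable so that $\Gamma$ is literally contained in $f(\Gamma,h)$ and initial clauses need only be lowered, and a literal is moved only when its irregularity height actually drops. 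You instead assign a single level to the whole clause, namely its DAG height measured from the axioms, so levels \emph{increase} toward $\bot$ and every literal of both premises is lifted at every inference. Your choice is simpler to state and check but slightly coarser: you provision $s+1$ levels where the refutation height suffices, which is exactly the economy that lets the paper phrase its transformation as depending only on a height parameter (cf.\ the discussion around \cref{qn:regular-strictly-effectively-general}). Both yield polynomial bounds, so the difference is one of bookkeeping rather than substance. One shared loose end, not specific to your argument: both writeups treat the constructed refutation as a DAG with possibly coincident intermediate clauses and do not verify that identifying duplicates, as \cref{def:resolution-derivation} formally requires, preserves regularity.
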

\begin{proof}
  Let $\Gamma$ be a formula with $n$~variables,
  and let $h$ be a size parameter.
  We will define a new formula~$f(\Gamma, h)$ such that
  if $\Gamma$ has a resolution refutation of size~$s$ and height~$h$,
  then $f(\Gamma, h)$ has a regular resolution refutation
  of size~$\poly(s, n)$ and height~$\poly(h, n)$.

  The formula~$f(\Gamma, h)$ is defined by introducing new variables
  and adding several $2$-clauses to~$\Gamma$.
  For each variable~$x$ of~$\Gamma$ and each~$j \in [h - 1]$,
  there is a new variable~$W[x, j]$.
  We refer to~$j$ as the \emph{level} of~$W[x, j]$.
  We identify~$W[x, h]$ with~$x$.
  Thus, $f(\Gamma, h)$ has a total of $h n$~variables,
  with $(h - 1) n$~of~them new.
  We extend the notation to define~$W[p, j]$ for~$p$ a~literal
  by letting $W[\lneg{x}, j]$ denote the literal~$\lneg{W[x, j]}$.

  We form~$f(\Gamma, h)$ by adding to~$\Gamma$ the $2$-clauses expressing
  for all~$j \in [h - 1]$ the equivalence~$W[x, j] \liff W[x, j + 1]$.
  That is,
  \begin{equation}\label{eq:addedTwoClauses}
    f(\Gamma, h) \coloneqq \Gamma
    \land \bigwedge_{x \in \var(\Gamma)} \bigwedge_{j \in [h - 1]}
    \mleft[
    \left(\lneg{W[x, j]} \lor W[x, j + 1]\right) \land
    \left(W[x, j] \lor \lneg{W[x, j + 1]}\right)
    \mright].
  \end{equation}
  The formula~$f(\Gamma, h)$ is satisfiable
  if and only if $\Gamma$ is satisfiable since the added clauses
  simply define new names for each variable of~$\Gamma$.

  Let $\Pi = C_1, \dots, C_s$ be
  a size-$s$, height-$h$ resolution refutation of~$\Gamma$,
  viewed as a directed graph as described in \cref{rem:resolution-graph}.
  To prove the theorem, we will describe how to turn~$\Pi$ into
  a regular resolution refutation~$\Pi'$ of~$f(\Gamma, h)$
  of size at most~$6hns$ and height at most~$hn$.
  The intuition for forming the regular refutation
  is that the new variables~$W[x, j]$ are equivalent to~$x$,
  and the general refutation~$\Pi$ can be turned into
  a regular refutation by replacing literals~$p$ with~$W[p, j]$,
  letting $j$ decrease as the refutation progresses.
  In this way, multiple resolutions on a variable~$x$
  are replaced by resolutions on variables~$W[x, j]$,
  with $j$ decreasing along paths in the refutation
  so that no~$W[x, j]$ is resolved on twice on any path.

  Let $D$ be a clause in~$\Pi$.
  Without loss of generality, there is at least
  one path in~$\Pi$ from~$D$ to~$\bot$.
  For a variable~$x$, the \emph{irregularity height}
  of~$x$ at~$D$ in~$\Pi$, denoted~$L_D(x)$,
  is defined to be the maximum number of inferences
  that use~$x$ as the resolution variable
  along any path in~$\Pi$ from~$D$ to~$\bot$.
  For~$p$ a~literal, we allow~$L_D$ to act on~$p$
  by letting $L_D(p) \coloneqq L_D(\var(p))$.
  The value of~$L_D(x)$ depends on~$\Pi$ of course,
  but this is suppressed in the notation.
  Note that $L_D(x) \leq h$.

  For each clause~$D$ in~$\Pi$, let
  \begin{equation*}
    g(D) \coloneqq \bigvee_{p \in D} W[p, L_D(p)].
  \end{equation*}
  The regular resolution refutation~$\Pi'$
  will have the form $P_0, P_1, \dots, P_s$,
  where each~$P_i$ is a finite sequence of clauses.
  The sequence~$P_0$ contains
  the initial clauses from~$\Gamma$ that appear in~$\Pi$
  and the newly added $2$-clauses in~$f(\Gamma, h)$.
  For all~$i \in [s]$, the sequence~$P_i$
  will end with the clause~$g(C_i)$.
  Since $g(\bot) = \bot$, the final clause of~$\Pi'$ will be~$\bot$,
  so $\Pi'$ will be a regular resolution refutation of~$f(\Gamma, h)$.

  The principal tool in forming each~$P_i$ will be
  ``lowering'' the levels of literals.
  Specifically, suppose that $E$~and~$F$ are clauses of the forms
  \begin{equation*}
    E = W[p_1, j_1] \lor \dots \lor W[p_t, j_t]
    \quad \text{ and } \quad
    F = W[p_1, k_1] \lor \dots \lor W[p_t, k_t]
  \end{equation*}
  with~$j_\ell \geq k_\ell$ for all~$\ell \in [t]$.
  When this holds, we say $E$ \emph{dominates}~$F$.
  Then $F$ can be derived from~$E$ by resolving it
  with the new~$f(\Gamma, h)$ clauses
  \begin{equation*}
    W[p_\ell, m] \lor \lneg{W[p_\ell, m + 1]} \qquad
    \text{for~$\ell \in [t]$
      and $m = j_\ell - 1, j_\ell - 2, \dots, k_\ell$.}
  \end{equation*}
  This is called \emph{lowering} $E$ to~$F$.

  We now describe how to inductively form
  the subderivations~$P_i$ of~$\Pi'$.
  Consider the clause~$C_i$,
  which we will write simply as~$C$ from this point on.
  The subderivation~$P_i$ needs to end with~$g(C)$.
  There are two cases to consider.
  \begin{description}
  \item[Case 1:] \textit{$C$ is an initial clause in~$\Pi$.}
    Since $C \in \Gamma$, it already appears in~$P_0$.
    Furthermore, $C$ dominates~$g(C)$,
    so we can lower~$C$ to obtain~$g(C)$.
    The subderivation consists of (zero or more) resolutions
    with $2$-clauses to perform the lowering.
  \item[Case 2:] \textit{$C$ is the resolvent of $A$~and~$B$ on~$y$ in~$\Pi$.}
    The subderivation~$P_i$ will lower the earlier-derived clauses
    $g(A)$~and~$g(B)$ to form clauses $A'$~and~$B'$
    that can be resolved to give the clause~$g(C)$.
    For each literal~$p$, define
    \begin{equation*}
      \lambda(p) \coloneqq
      \begin{cases}
        L_C(p) + 1 & \text{if $\var(p) = y$} \\
        L_C(p) & \text{otherwise}.
      \end{cases}
    \end{equation*}
    Then $A'$~and~$B'$ are defined as
    \begin{equation*}
      A' = \bigvee_{p \in A} W[p, \lambda(p)] \quad \text{ and } \quad
      B' = \bigvee_{p \in B} W[p, \lambda(p)].
    \end{equation*}
    Since $\Pi$ contains paths from~$A$ and from~$B$
    that resolve on~$y$ and then pass through~$C$,
    we have $\lambda(p) \leq j$ for all~$W[p, j]$ in $g(A)$~or~$g(B)$.
    Therefore, $A'$~and~$B'$ are indeed dominated by $g(A)$~and~$g(B)$,
    and thus can be derived in~$P_i$.
    After that, $P_i$ resolves $A'$~and~$B'$
    on~$W[y, \lambda(y)]$ to derive~$g(C)$.
    \cref{fig:lowering-resolution-premises} summarizes
    the derivation of~$g(C)$ from $g(A)$~and~$g(B)$.

    \begin{figure}[t]
      \centering
      \begin{tikzpicture}[
        line width=0.175mm,
        horizontal center at origin,
        every node/.style={inner sep=2pt}
        ]
        \begin{scope}[rotate=45]
          \node (gA) at (0, 5.5) {$g(A)$};
          \coordinate (D1') at (0, 4.5);
          \coordinate (D1) at ($(D1') + (0:1)$);
          \coordinate (D2') at (0, 3.5);
          \coordinate (D2) at ($(D2') + (0:1)$);
          \coordinate (s) at ($(D2') + (0, -0.35)$);
          \coordinate (D3') at (0, 2);
          \coordinate (t) at ($(D3') + (0, 0.35)$);
          \coordinate (D3) at ($(D3') + (0:1)$);
          \node (A') at (0, 1) {$A'$};
        \end{scope}

        \draw (gA) edge (D1') (D1') edge (D2') (D3') edge (A');
        \draw (D2') -- (s);
        \draw[dots=10 per 1cm, line width=0.3mm, shorten <=(1cm/10)] (s) -- (t);
        \draw (t) -- (D3');
        \draw (D1') -- (D1);
        \draw (D2') -- (D2);
        \draw (D3') -- (D3);

        \begin{scope}[rotate=-45, xscale=-1]
          \node (gB) at (0, 5.5) {$g(B)$};
          \coordinate (E1') at (0, 4.5);
          \coordinate (E1) at ($(E1') + (0:1)$);
          \coordinate (E2') at (0, 3.5);
          \coordinate (E2) at ($(E2') + (0:1)$);
          \coordinate (u) at ($(E2') + (0, -0.35)$);
          \coordinate (E3') at (0, 2);
          \coordinate (v) at ($(E3') + (0, 0.35)$);
          \coordinate (E3) at ($(E3') + (0:1)$);
          \node (B') at (0, 1) {$B'$};
        \end{scope}

        \draw (gB) edge (E1') (E1') edge (E2') (E3') edge (B');
        \draw (E2') -- (u);
        \draw[dots=10 per 1cm, line width=0.3mm, shorten <=(1cm/10)] (u) -- (v);
        \draw (v) -- (E3');
        \draw (E1') -- (E1);
        \draw (E2') -- (E2);
        \draw (E3') -- (E3);

        \node (gC) at (0, 0) {$g(C)$};

        \draw (B') -- (gC);
        \draw (A') -- (gC);
      \end{tikzpicture}%
      \caption{Lowering the premises for a resolution inference in Case~2.
        The unlabeled leaves correspond to initial clauses
        from~$f(\Gamma, h) \setminus \Gamma$, included earlier in~$P_0$.}
      \label{fig:lowering-resolution-premises}
    \end{figure}
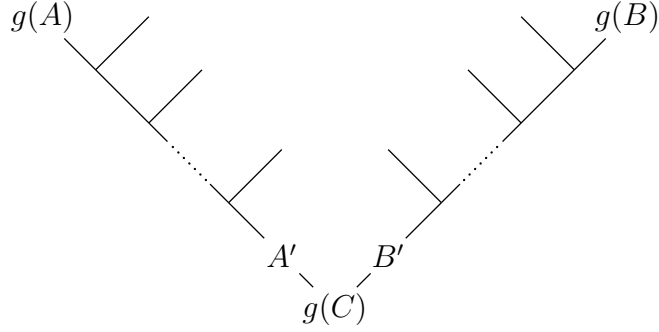
  \end{description}

  The refutation~$\Pi'$ is completed once
  $P_s$ is formed, as it derives $g(C_s) = \bot$.
  By construction, for every~$x$,
  the levels~$j$ of the resolution variables~$W[x, j]$
  are decreasing along all paths in the refutation~$\Pi'$.
  Therefore, $\Pi'$ is a regular resolution refutation.

  It is straightforward to see that $\Pi'$ has size
  at most $s + 2hn + (2hn + 1)s \leq 6hns$.
  This bound is calculated as follows:
  $\Pi'$~has at most~$s$ clauses from~$\Gamma$.
  It also has the $\leq 2hn$~many $2$-clauses added
  in the definition~\cref{eq:addedTwoClauses} of~$f(\Gamma, h)$.
  The term~$(2hn + 1)s$ is justified by noting
  that for each of the $\leq s$~resolution inferences
  in the original refutation~$\Pi$,
  lowering is performed in Case~2 as needed on the literals in the clauses
  (at most $h$~times on each of the $\leq 2n$~literals
  in the clauses being resolved)
  and then one resolution inference is performed
  by resolving on the lowered version of~$y$.

  The bound~$hn$ on the height of~$\Pi'$ follows from the fact
  that it is a regular resolution refutation of a formula with $hn$~variables,
  since each variable can be resolved on at most once
  along any path in the refutation~$\Pi'$.
\end{proof}

The proof of \cref{thm:regular-effectively-general} establishes
a statement stronger than necessary for the effective simulation to hold.
\cref{def:effective-simulation} allows~$f$ to depend on proof size;
in our case it suffices for~$f$ to depend on height.
This dependence is needed in our simulation
to ensure that there are sufficiently many variables~$W[x, j]$.
We leave open whether the dependence can be eliminated:

\begin{question}\label{qn:regular-strictly-effectively-general}
  Does regular resolution strictly effectively simulate resolution?
\end{question}

\section{Corollaries}
\label{sec:corollaries}

\cref{thm:regular-effectively-general} has some interesting consequences,
given in \cref{cor:regular-under-substitutions,%
  cor:regular-weak-automatability,%
  cor:regular-automatability}.
We need a few definitions before we can state the corollaries.

\subsection{Closure under substitutions}
\label{sec:closure-under-substitutions}

Let $\bV$~and~$\bL$ denote the sets of
all variables and all literals, respectively.
A \emph{substitution} is a partial function~%
$\sigma \colon \bV \to \bL \cup \{\false, \true\}$.
We allow~$\sigma$ to act on literals
by letting $\sigma(\lneg{x}) \coloneqq \lneg{\sigma(x)}$.
We call a set \emph{tautological}
if it contains~$\true$ or a pair of complementary literals.
For a clause~$C$, we define
$\sigma(C) \coloneqq \{\sigma(p) \colon p \in C\}$ and
\begin{equation*}
  C|_\sigma \coloneqq
  \begin{cases}
    \true & \text{if } \sigma(C) \text{ is tautological} \\
    \sigma(C) \setminus \{\false\} & \text{otherwise.}
  \end{cases}
\end{equation*}
For a formula~$\Gamma$, we define
$\Gamma|_\sigma \coloneqq \{C|_\sigma \colon
C \in \Gamma \text{ and } C|_\sigma \neq \true\}$.

\begin{definition}\label{def:closure-under-substitutions}
  A proof system~$P$ is \emph{closed under substitutions}
  if for every formula~$\Gamma$ and every substitution~$\sigma$,
  the formula~$\Gamma|_\sigma$ has a~$P$-proof
  of size polynomial in the size of the smallest~$P$-proof of~$\Gamma$.
  We say $P$ is \emph{p-closed under substitutions}
  if there exists an algorithm that,
  given a size-$s$ $P$-proof of~$\Gamma$ and a substitution~$\sigma$,
  outputs a~$P$-proof of~$\Gamma|_\sigma$ in time polynomial in~$s$.
\end{definition}

Most of the natural proof systems are closed under substitutions.
\cref{thm:regular-effectively-general} can be used to prove
that this is not the case for regular resolution;
in fact, due to the exponential separation
between regular resolution and resolution~\cite{AJPU07},
applying substitutions to formulas can increase their
regular resolution refutation complexity exponentially.

\begin{corollary}\label{cor:regular-under-substitutions}
  Regular resolution is not closed under substitutions.
\end{corollary}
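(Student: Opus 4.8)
The plan is to derive \cref{cor:regular-under-substitutions} by combining \cref{thm:regular-effectively-general} with the known exponential separation between regular resolution and resolution. First I would recall the separation of \cite{AJPU07}: there is a sequence of formulas~$(\Gamma_n)_n$, with $\sizeOf{\Gamma_n} = \poly(n)$, that have resolution refutations of size~$\poly(n)$ but require regular resolution refutations of size~$2^{\Omega(n^\varepsilon)}$ for some constant~$\varepsilon > 0$. Since resolution refutations have height at most the number of variables, which is $\poly(n)$, the height parameter~$h$ in \cref{thm:regular-effectively-general} can be taken to be $\poly(n)$ as well. Applying the construction~$f$ from the proof of \cref{thm:regular-effectively-general} yields formulas~$f(\Gamma_n, h_n)$ of size~$\poly(n)$ that have regular resolution refutations of size~$\poly(n)$.

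The key observation is that $\Gamma_n$ is obtained from $f(\Gamma_n, h_n)$ by a substitution. Indeed, define the substitution $\sigma_n$ that maps, for every variable~$x$ of~$\Gamma_n$ and every level~$j \in [h_n - 1]$, the new variable~$W[x, j]$ to the literal~$x$ (and acts as the identity on the variables of~$\Gamma_n$ themselves, which are identified with $W[x, h_n]$). Under $\sigma_n$, each added $2$-clause $\lneg{W[x,j]} \lor W[x,j+1]$ and $W[x,j] \lor \lneg{W[x,j+1]}$ becomes $\lneg{x} \lor x$, hence tautological and thus removed, while each clause of~$\Gamma_n$ is left unchanged. Therefore $f(\Gamma_n, h_n)|_{\sigma_n} = \Gamma_n$. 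If regular resolution were closed under substitutions, then from the $\poly(n)$-size regular refutation of~$f(\Gamma_n, h_n)$ we would obtain a regular refutation of~$\Gamma_n$ of size~$\poly(n)$, contradicting the lower bound of~\cite{AJPU07}.

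The main point requiring care is the bookkeeping on sizes: one must check that $\sizeOf{f(\Gamma_n, h_n)}$ is still $\poly(n)$ (it is, since we add $O(h_n n)$ many $2$-clauses and $h_n = \poly(n)$), and that the polynomial bound in the hypothetical closure under substitutions, composed with the polynomial size bound from \cref{thm:regular-effectively-general}, remains polynomial in~$n$ — so that it genuinely contradicts a truly superpolynomial (indeed exponential) lower bound. There is no real obstacle here beyond threading these polynomials together correctly; the argument is essentially a one-line reduction once the substitution $\sigma_n$ is identified. The final remark about the exponential blow-up in refutation complexity then follows immediately: $f(\Gamma_n, h_n)$ has regular refutations of size $\poly(n)$, but its restriction $\Gamma_n$ under~$\sigma_n$ requires regular refutations of size $2^{\Omega(n^\varepsilon)}$.
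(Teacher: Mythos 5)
Your proposal is correct and follows essentially the same route as the paper: apply the transformation~$f$ from \cref{thm:regular-effectively-general} to a family separating regular from general resolution, and observe that the substitution sending each~$W[x,j]$ to~$x$ collapses the added $2$-clauses to tautologies and recovers~$\Gamma_n$ exactly. One small slip: general resolution refutations need \emph{not} have height bounded by the number of variables (that holds only for regular refutations), but your conclusion $h_n = \poly(n)$ is still fine because the height of any refutation is trivially at most its size.
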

\begin{proof}
  Let $\{\Gamma_n\}_{n = 1}^\infty$ be a family of formulas
  admitting polynomial-size refutations in resolution
  while requiring exponential-size refutations in regular resolution.
  For all~$n$, let $h_n$ denote the height of
  the smallest resolution refutation of~$\Gamma_n$.

  Let $f$ be the formula transformation
  defined in the proof of \cref{thm:regular-effectively-general},
  and consider a substitution~$\sigma_n$ that maps~$W[x, j]$ to~$x$
  for all~$x \in \var(\Gamma_n)$ and~$j \in [h_n - 1]$.
  The formula~$f(\Gamma_n, h_n)$ can be
  refuted in polynomial size in regular resolution,
  whereas $f(\Gamma_n, h_n)|_{\sigma_n} = \Gamma_n$ requires exponential size.
\end{proof}

\subsection{Automatability}
\label{sec:automatability}

\begin{definition}\label{def:automatability}
  A proof system~$P$ is \emph{automatable}
  if there exists an algorithm~$A$ that,
  given an unsatisfiable formula~$\Gamma$,
  outputs a~$P$-proof of~$\Gamma$ in time polynomial in~$\sizeOf{\Gamma} + s$,
  where $s$ is the size of the smallest~$P$-proof of~$\Gamma$.
  We say $P$ is \emph{weakly automatable}
  if the algorithm~$A$ is allowed to output a proof in some other system.
\end{definition}

Effective simulations give reductions between
weak automatability of proof systems:

\begin{proposition}[{\cite[Proposition~2.7]{PS10}}]%
  \label{prop:effective-simulations-reduce-weak-automatability}
  Let $P$~and~$Q$ be proof systems.
  If $P$ effectively simulates~$Q$ and $P$ is weakly automatable,
  then $Q$ is weakly automatable.
\end{proposition}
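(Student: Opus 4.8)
The plan is to combine the effective simulation with the weak automatability algorithm for~$P$ by dovetailing over the size parameter~$s$, using the equisatisfiability clause of \cref{def:effective-simulation} to reinterpret a proof of the transformed formula~$f(\Gamma, s)$ as a proof of~$\Gamma$ itself in a suitable new system.

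First I would fix the ingredients. Since $P$ is weakly automatable, take a proof system~$P'$ and an algorithm~$A$ that, on an unsatisfiable input~$\Delta$, outputs a $P'$-refutation of~$\Delta$ in time polynomial in~$\sizeOf{\Delta}$ plus the size of the smallest $P$-refutation of~$\Delta$; and take an~$f$ witnessing that $P$ effectively simulates~$Q$. Writing $s_Q(\Gamma)$ for the size of the smallest $Q$-refutation of~$\Gamma$, the main piece of bookkeeping is to extract one \emph{fixed} polynomial~$p$---depending only on the polynomials implicit in the two hypotheses, not on~$\Gamma$---such that for every~$\Gamma$ and every~$s \geq s_Q(\Gamma)$: the formula~$f(\Gamma, s)$ is computable from~$(\Gamma, s)$ in at most~$p(\sizeOf{\Gamma} + s)$ steps and satisfies~$\sizeOf{f(\Gamma, s)} \leq p(\sizeOf{\Gamma} + s)$ (first item of \cref{def:effective-simulation}), and, when $\Gamma$ is unsatisfiable, $f(\Gamma, s)$ has a $P$-refutation of size at most~$p(\sizeOf{\Gamma} + s)$ (second item), so that---composing with the running-time guarantee of~$A$---the algorithm~$A$ halts on input~$f(\Gamma, s)$ within~$p(\sizeOf{\Gamma} + s)$ steps and returns a valid $P'$-refutation of~$f(\Gamma, s)$.

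Next I would define the target system~$R$ for~$Q$: an $R$-refutation of~$\Gamma$ is a pair~$(s, \pi)$, with~$s$ recorded in unary, and it is valid precisely when $\pi$ is a valid $P'$-refutation of~$f(\Gamma, s)$. With~$s$ in unary this is checkable in polynomial time (recompute~$f(\Gamma, s)$, then run the polynomial-time $P'$-checker), so $R$ is a legitimate proof system: it is sound because unsatisfiability of~$f(\Gamma, s)$ forces that of~$\Gamma$, and complete because for unsatisfiable~$\Gamma$ the formula~$f(\Gamma, 1)$ is unsatisfiable and hence has a $P'$-refutation. Finally I would let $B$ be the dovetailing algorithm: on input~$\Gamma$, for $s = 1, 2, 3, \dots$, compute~$f(\Gamma, s)$, simulate~$A$ on it for~$p(\sizeOf{\Gamma} + s)$ steps, and as soon as $A$ returns within this budget a~$\pi$ that the $R$-checker accepts, output~$(s, \pi)$ and halt. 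For unsatisfiable~$\Gamma$, the extraction above shows that stage~$s = s_Q(\Gamma)$ succeeds unless $B$ has already halted, so $B$ stops at some~$s_0 \leq s_Q(\Gamma)$ after spending $\sum_{s \leq s_0} \poly(\sizeOf{\Gamma} + s) = \poly(\sizeOf{\Gamma} + s_Q(\Gamma))$ time, and it outputs a valid $R$-refutation of~$\Gamma$---which is exactly weak automatability of~$Q$.

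I do not expect a real obstacle: the argument is essentially bookkeeping. The one point that must be seen clearly---and that the equisatisfiability clause of \cref{def:effective-simulation} together with the ``proof in some other system'' allowance of \cref{def:automatability} are precisely there to license---is that a $P'$-refutation of~$f(\Gamma, s)$, paired with~$s$, genuinely \emph{is} a proof of~$\Gamma$ in the perfectly respectable system~$R$. Beyond that, the only care needed is to keep~$p$ independent of~$\Gamma$, to record~$s$ in unary so that $R$-proof-checking stays polynomial time, and to observe that the dovetailing loop's total running time telescopes to a polynomial in~$\sizeOf{\Gamma} + s_Q(\Gamma)$.
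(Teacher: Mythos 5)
Your argument is correct; the paper itself gives no proof of this proposition (it is quoted from Pitassi--Santhanam), and your construction---dovetailing over the unary size parameter, running the weak automating algorithm on $f(\Gamma,s)$ with a polynomial time budget, and packaging the result as a proof in the auxiliary system $R$ whose soundness rests on equisatisfiability---is exactly the standard argument from that reference. The details you flag (a single dominating polynomial $p$ independent of $\Gamma$, unary $s$ for polynomial-time checkability of $R$, and verifying $A$'s output before accepting it, since $A$'s behavior is unconstrained when $s < s_Q(\Gamma)$ or $\Gamma$ is satisfiable) are precisely the right ones.
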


Thus, the following is an immediate corollary of
\cref{thm:regular-effectively-general}.

\begin{corollary}\label{cor:regular-weak-automatability}
  If regular resolution is weakly automatable, then so is resolution.
\end{corollary}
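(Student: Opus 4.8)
The plan is to derive the corollary as a black-box consequence of \cref{thm:regular-effectively-general} and \cref{prop:effective-simulations-reduce-weak-automatability}, instantiating the latter with $P$ the regular resolution system and $Q$ the resolution system. \cref{thm:regular-effectively-general} supplies precisely the hypothesis ``$P$ effectively simulates $Q$,'' so \cref{prop:effective-simulations-reduce-weak-automatability} immediately yields that weak automatability of regular resolution implies weak automatability of resolution. The proof is then just this composition, and writing only that would suffice.

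Since \cref{prop:effective-simulations-reduce-weak-automatability} already does the work, the one thing worth spelling out in the write-up is the mechanism behind it, because it clarifies how the size parameter~$h$ in~$f(\Gamma, h)$ is handled despite not being known in advance. Suppose regular resolution is weakly automatable, witnessed by an algorithm~$A$ that, on any unsatisfiable formula, outputs in time polynomial in the formula size plus the smallest regular resolution refutation size a refutation in some fixed, sound, polynomial-time checkable proof system~$R$. Given an unsatisfiable formula~$\Gamma$ with $n$~variables whose smallest resolution refutation has size~$s$ and height~$h \leq s$, the algorithm for resolution would dovetail: for geometrically increasing time budgets~$t$, and for each~$j \leq t$, run~$A$ on~$f(\Gamma, j)$ for~$t$ steps, stopping the first time a run returns a valid $R$-refutation of the corresponding~$f(\Gamma, j)$. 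By the proof of \cref{thm:regular-effectively-general}, $f(\Gamma, h)$~has a regular resolution refutation of size~$\poly(s, n)$, hence of size polynomial in~$\sizeOf{\Gamma} + s$; so $A$~halts on~$f(\Gamma, h)$ within a time bound~$\poly(\sizeOf{\Gamma} + s)$, and the dovetailing reaches that bound after a polynomially comparable amount of work. The output is an $R$-refutation of~$f(\Gamma, j)$ for some~$j$; since $f(\Gamma, j)$ is satisfiable if and only if $\Gamma$~is --- indeed $\Gamma = f(\Gamma, j)|_{\sigma}$ for the level-collapsing substitution~$\sigma$ of \cref{cor:regular-under-substitutions} --- the pair consisting of~$j$ together with the $R$-refutation of~$f(\Gamma, j)$ is a polynomial-time verifiable certificate that $\Gamma$~is unsatisfiable, i.e., a proof in a proof system that weakly automates resolution.

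The main --- and essentially only --- obstacle is bookkeeping: one must check that the equisatisfiability is used in the direction ``if $f(\Gamma, j)$ is unsatisfiable then $\Gamma$ is unsatisfiable,'' so that the returned object genuinely certifies the unsatisfiability of~$\Gamma$, and that reconstructing~$f(\Gamma, j)$ from~$\Gamma$ and~$j$ and checking the $R$-refutation run in polynomial time. Both are immediate from the construction of~$f$ in the proof of \cref{thm:regular-effectively-general}. I do not expect any genuinely hard step; in the final write-up it may well be cleanest to simply invoke \cref{thm:regular-effectively-general} and \cref{prop:effective-simulations-reduce-weak-automatability} and leave the mechanism to the cited work.
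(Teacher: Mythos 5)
Your proposal is correct and matches the paper exactly: the paper derives this corollary as an immediate consequence of \cref{thm:regular-effectively-general} together with \cref{prop:effective-simulations-reduce-weak-automatability}, with no further argument. Your additional unpacking of the dovetailing mechanism is a sound (and correctly oriented) reconstruction of the cited proposition's proof, but it is not needed for the write-up.
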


Although effective simulations do not necessarily give
reductions between (strong) automatability,
in our case \cref{cor:regular-weak-automatability} can be extended:

\begin{corollary}\label{cor:regular-automatability}
  If regular resolution is automatable, then so is resolution.
\end{corollary}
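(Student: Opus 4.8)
The plan is to combine \cref{thm:regular-effectively-general} with two simple facts. First, the substitution~$\sigma$ that maps each~$W[x, j]$ (for $x \in \var(\Gamma)$ and $j \in [h - 1]$) to~$x$ satisfies~$f(\Gamma, h)|_\sigma = \Gamma$, exactly as used in the proof of \cref{cor:regular-under-substitutions}. Second, in contrast to regular resolution, ordinary resolution is p-closed under substitutions (\cref{def:closure-under-substitutions}): applying a substitution to each clause of a resolution refutation translates most inferences directly into resolution inferences and leaves only the steps whose resolved variable was set to a constant (or whose premises became tautological) to be handled by reusing an earlier clause, so that in polynomial time one obtains a resolution refutation of the substituted formula. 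Together these give: from any resolution refutation of~$f(\Gamma, h)$ one can in polynomial time extract a resolution refutation of~$\Gamma$.

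Now suppose regular resolution is automatable, say by an algorithm~$B$ that on an unsatisfiable input~$\Delta$ whose smallest regular resolution refutation has size~$t$ halts within~$p(\sizeOf{\Delta} + t)$ steps and outputs such a refutation, for some polynomial~$p$. I will build an algorithm~$A$ automating resolution in the sense of \cref{def:automatability}. On an unsatisfiable input~$\Gamma$ with $n$~variables, $A$ runs in phases $T = 1, 2, 4, 8, \dots$; in phase~$T$, for each~$h \in [T]$ it forms~$f(\Gamma, h)$ (computable in time~$\poly(\sizeOf{\Gamma} + h)$) and simulates~$B$ on~$f(\Gamma, h)$ for~$T$ steps, from scratch. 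The first time some simulation halts---necessarily with a regular resolution refutation~$\Pi'$ of~$f(\Gamma, h)$, which is in particular a resolution refutation of~$f(\Gamma, h)$ since $f(\Gamma, h)$~is unsatisfiable---the algorithm applies the substitution~$\sigma$ above to~$\Pi'$, obtaining a resolution refutation of~$\Gamma$, which it outputs.

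It remains to bound the running time. Let $\Pi$~be a smallest resolution refutation of~$\Gamma$, of size~$s$ and height~$h$; note $h \leq s$, since a path in the refutation graph of~\cref{rem:resolution-graph} has at most~$s$ nodes. By the proof of \cref{thm:regular-effectively-general}, $f(\Gamma, h)$ has a regular resolution refutation of size at most~$6hns$, so $B$~halts on~$f(\Gamma, h)$ within~$p(\sizeOf{f(\Gamma, h)} + 6hns)$ steps; since $\sizeOf{f(\Gamma, h)} = \poly(\sizeOf{\Gamma} + h)$, $h \leq s$, and $n \leq \sizeOf{\Gamma}$, this is at most some $Q = \poly(\sizeOf{\Gamma} + s)$. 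Hence $A$~halts no later than the first phase~$T^\dagger$ with~$T^\dagger \geq \max(h, Q)$, so $T^\dagger \leq 2Q = \poly(\sizeOf{\Gamma} + s)$. Each phase~$T$ costs~$\poly(\sizeOf{\Gamma} + T)$, there are~$O(\log T^\dagger)$ phases, and the concluding substitution step is applied to a refutation of size at most~$T^\dagger$ and so costs~$\poly(T^\dagger)$; summing, $A$~runs in time~$\poly(\sizeOf{\Gamma} + s)$, which witnesses that resolution is automatable.

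The crux is coping with the two quantities that $A$ does not know in advance: the height~$h$ and the size~$s$ of an optimal resolution refutation of~$\Gamma$. The doubling time budget over all candidate heights is what handles this---it guarantees that $A$ eventually allots~$B$ enough time on~$f(\Gamma, h)$ for the true value of~$h$ while keeping the total work polynomial---and I expect the only genuine care needed is in this accounting, since \cref{thm:regular-effectively-general} already delivers the small regular resolution refutation of~$f(\Gamma, h)$ and p-closure of resolution under substitutions delivers the translation back to~$\Gamma$.
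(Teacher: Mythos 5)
Your proof is correct and follows essentially the same route as the paper: run the regular-resolution automating algorithm on $f(\Gamma, r)$ for increasing values of the unknown parameter with suitable time budgets, then pull the resulting regular refutation back to a refutation of $\Gamma$ via the substitution $W[x,j] \mapsto x$ and p-closure of resolution under substitutions. The only difference is cosmetic---you use a doubling time budget (which avoids hard-coding the polynomials $t$ and $u$ into the algorithm), whereas the paper simulates the automating algorithm on $f(\Gamma, r)$ for an explicitly computed $t(u(\sizeOf{\Gamma}+r))$ steps at each $r$.
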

\begin{proof}
  Let $A$ be an algorithm that automates regular resolution
  in time bounded by a polynomial~$t$.
  Let $f$ be the formula transformation
  defined in the proof of \cref{thm:regular-effectively-general},
  and let $u$ be a polynomial such that
  when $s$ is at least the size of
  the smallest resolution refutation of~$\Gamma$,
  the formula~$f(\Gamma, s)$ has
  a regular resolution refutation of size~$u(\sizeOf{\Gamma} + s)$.

  Resolution can be automated as follows:
  Given an unsatisfiable formula~$\Gamma$, for each $r = 0, 1, \dots$,
  simulate~$A$ on~$f(\Gamma, r)$ for $t(u(\sizeOf{\Gamma} + r))$~steps
  until, for some~$r$, it outputs
  a regular resolution refutation~$\Pi$ of~$f(\Gamma, r)$.
  Let $\sigma$ be a substitution that maps~$W[x, j]$ to~$x$
  for all~$x \in \var(\Gamma)$ and~$j \in [r - 1]$.
  Resolution is p-closed under substitutions
  and we have $f(\Gamma, r)|_\sigma = \Gamma$, so convert~$\Pi$
  into a resolution refutation~$\Pi'$ of~$\Gamma$ and output~$\Pi'$.
\end{proof}

An analogous result is already known:
Atserias and Müller~\cite{AM20} proved that
resolution is not automatable unless $\PTIME = \NP$,
and it was observed afterwards
that their result can be extended to regular resolution.
However, this extension is a nontrivial step
and requires at least an inspection of their proof.
(See for instance the preprint by Bell~\cite{Bel20} for a detailed writeup.)
In contrast, \cref{cor:regular-automatability} recovers
the same extension in a black-box fashion.

\section*{Acknowledgments}
This work was done in part while the authors were visiting
the Simons Institute for the Theory of Computing.
Sam Buss's research is supported in part by Simons Foundation grant 578919.

\newcommand{\Proc}[1]{Proceedings of the \nth{#1}}
\newcommand{\STOC}[1]{\Proc{#1} Symposium on Theory of Computing (STOC)}
\newcommand{\FOCS}[1]{\Proc{#1} Symposium on Foundations of Computer Science
  (FOCS)} \newcommand{\SODA}[1]{\Proc{#1} Symposium on Discrete Algorithms
  (SODA)} \newcommand{\CCC}[1]{\Proc{#1} Computational Complexity Conference
  (CCC)} \newcommand{\ITCS}[1]{\Proc{#1} Innovations in Theoretical Computer
  Science (ITCS)} \newcommand{\ICS}[1]{\Proc{#1} Innovations in Computer
  Science (ICS)} \newcommand{\ICALP}[1]{\Proc{#1} International Colloquium on
  Automata, Languages, and Programming (ICALP)}
\newcommand{\STACS}[1]{\Proc{#1} Symposium on Theoretical Aspects of Computer
  Science (STACS)} \newcommand{\MFCS}[1]{\Proc{#1} International Symposium on
  Mathematical Foundations of Computer Science (MFCS)}
\newcommand{\LICS}[1]{\Proc{#1} Symposium on Logic in Computer Science
  (LICS)} \newcommand{\CSL}[1]{\Proc{#1} Conference on Computer Science Logic
  (CSL)} \newcommand{\CSLw}[1]{\Proc{#1} International Workshop on Computer
  Science Logic (CSL)} \newcommand{\DLT}[1]{\Proc{#1} International Conference
  on Developments in Language Theory (DLT)} \newcommand{\TAMC}[1]{\Proc{#1}
  International Conference on Theory and Applications of Models of Computation
  (TAMC)} \newcommand{\RTA}[1]{\Proc{#1} International Conference on Rewriting
  Techniques and Applications (RTA)} \newcommand{\IJCAR}[1]{\Proc{#1}
  International Joint Conference on Automated Reasoning (IJCAR)}
\newcommand{\CADE}[1]{\Proc{#1} Conference on Automated Deduction (CADE)}
\newcommand{\SAT}[1]{\Proc{#1} International Conference on Theory and
  Applications of Satisfiability Testing (SAT)}
\newcommand{\TABLEAUX}[1]{\Proc{#1} International Conference on Automated
  Reasoning with Analytic Tableaux and Related Methods (TABLEAUX)}
\newcommand{\TACAS}[1]{\Proc{#1} International Conference on Tools and
  Algorithms for the Construction and Analysis of Systems (TACAS)}
\newcommand{\LPAR}[1]{\Proc{#1} International Conference on Logic for
  Programming, Artificial Intelligence and Reasoning (LPAR)}
\newcommand{\HVC}[1]{\Proc{#1} Haifa Verification Conference (HVC)}
\newcommand{\DAC}[1]{\Proc{#1} Design Automation Conference (DAC)}
\newcommand{\DATE}{Proceedings of the Design, Automation and Test in Europe
  Conference (DATE)} \newcommand{\ISAIM}[1]{\Proc{#1} International Symposium
  on Artificial Intelligence and Mathematics (ISAIM)}
\newcommand{\AAAI}[1]{\Proc{#1} AAAI Conference on Artificial Intelligence
  (AAAI)} \newcommand{\ICML}[1]{\Proc{#1} International Conference on Machine
  Learning (ICML)} \newcommand{\NeurIPS}[1]{\Proc{#1} Conference on Neural
  Information Processing Systems (NeurIPS)}

\end{document}